  \def\x{\endgroup\ExecuteOptions{dvipdfm}}%
        \def\x{\endgroup\ExecuteOptions{pdftex}}%
\newtheorem{theorem}{Theorem}
\newtheorem{definition}{Definition}
\newtheorem{corollary}{Corollary}
\newenvironment{proof}{\par\noindent {\it Proof.}}										
		{\begin{flushright} \vspace*{-1mm} \mbox{$\Box$} \end{flushright}}
\newlength\min@xx
\newcommand*\xxrightarrow[1]{\begingroup
  \settowidth\min@xx{$\m@th\scriptstyle#1$}
  \@xxrightarrow}
\newcommand*\@xxrightarrow[2][]{
  \sbox8{$\m@th\scriptstyle#1$}  
  \ifdim\wd8>\min@xx \min@xx=\wd8 \fi
  \sbox8{$\m@th\scriptstyle#2$} 
  \ifdim\wd8>\min@xx \min@xx=\wd8 \fi
  \xrightarrow[{\mathmakebox[\min@xx]{\scriptstyle#1}}]
    {\mathmakebox[\min@xx]{\scriptstyle#2}}
  \endgroup}
\newcommand*\xxleftarrow[1]{\begingroup
  \settowidth\min@xx{$\m@th\scriptstyle#1$}
  \@xxleftarrow}
\newcommand*\@xxleftarrow[2][]{
  \sbox8{$\m@th\scriptstyle#1$}  
  \ifdim\wd8>\min@xx \min@xx=\wd8 \fi
  \sbox8{$\m@th\scriptstyle#2$} 
  \ifdim\wd8>\min@xx \min@xx=\wd8 \fi
  \xleftarrow[{\mathmakebox[\min@xx]{\scriptstyle#1}}]
    {\mathmakebox[\min@xx]{\scriptstyle#2}}
  \endgroup}
\newcommand{\Pra}{\mathbb{P}}
\begin{document}

\title{How to Cooperate Locally to Improve Global Privacy in Social Networks? \\
 On Amplification of  Privacy Preserving Data Aggregation 
}

\author{\IEEEauthorblockN{Krzysztof Grining}
\IEEEauthorblockA{Department of Computer Science\\
Faculty of Fundamental \\ Problems of Technology, WUT}
\and
\IEEEauthorblockN{Marek Klonowski}
\IEEEauthorblockA{Department of Computer Science\\
Faculty of Fundamental \\ Problems of Technology, WUT}
\and
\IEEEauthorblockN{Malgorzata Sulkowska}
\IEEEauthorblockA{Department of Computer Science\\
Faculty of Fundamental \\ Problems of Technology, WUT}}


\maketitle\footnote{Supported by Polish National Science Center - NCN, decision number DEC-2013/ 08/M/ST6/00928 (Harmonia)}

\begin{abstract}
In many systems  privacy of users  depends on the number of participants applying collectively 
some method to protect their security. Indeed, there are numerous already classic results about revealing aggregated 
data from a set of users. The conclusion is usually as follows: if you have enough friends to ``aggregate'' the private data, 
you can safely reveal your private information. 

Apart from data aggregation,  it has been noticed that in a wider context privacy can be often reduced to being hidden in a crowd. 
Generally, the problems is how to create such crowd. This task may be not easy in some distributed systems, wherein gathering
enough ``individuals'' is hard for practical reasons.   

Such example are social networks (or similar systems), where users have only a limited number of semi trusted contacts and their aim is to 
reveal some aggregated data in a privacy preserving manner.  This may be particularly  problematic in the presence of a strong  adversary  that can 
additionally corrupt some users.

We show two   methods that allow to significantly amplify privacy with only limited number of local operations and very 
moderate communication overhead. Except theoretical analysis we show experimental results on topologies of  real-life social networks 
to demonstrate that our methods can significantly amplify privacy of chosen aggregation protocols even facing a massive attack of a 
powerful adversary.

We believe however  that our results can have much wider applications for improving security of systems based on locally trusted relations.

\begin{IEEEkeywords}
anonymity, random graph, big component, adversary
\end{IEEEkeywords}
\end{abstract}

 \vspace*{-10pt}

\section{Introduction}\label{sect:intro}

Most algorithms providing anonymity or privacy in distributed systems consist in hiding an element in a group of other elements. 
Indeed, one of the very first  definitions of anonymity from ~\cite{ANODEF0} describes it as a \textit{
state of being not identifiable within a set of subjects, the  ``anonymity set''.}

Similar approach to privacy in the context of data bases is caught in  \textit{$k$-anonymity} metrics (~\cite{samarati2001protecting,d-sweeney2,d-sweeney3}). 
That is, the privacy is preserved as long as  each element is revealed in a group of at least  $k$ other, identical 
elements. In this metric as well as some consecutive concepts like $\ell$-\textit{diversity} \cite{d-machanava} or $m$-\textit{invariance}~\cite{d-xiao},  
the bigger the ``anonymity set'' is, the stronger the privacy guarantees are. This idea is also  reflected in further definitions 
of anonymity/privacy \cite{ANODEF2,ANODEF1}. 

It turns out however that similar phenomenon can be also observed in systems typically investigated from \textit{differential privacy} perspective. 
Let us remind that this privacy metric is in fact a standard one and was introduced in the seminal paper~\cite{dW1}. 
In the context of distributed system of somehow connected individuals we usually consider a problem where some function of data has to be
revealed preserving privacy of individuals. Many real life cases fall into this scenario. 
The most obvious example is privacy preserving data aggregation, wherein we need to reveal e.g. a sum of values of users protecting their privacy at the same time. 
Such aim can be realized using combination of cryptography and the common trick of adding random value, (\textit{a noise}),
to the aggregated data (see for example \cite{PaniShi} and \cite{Rastogi}). It turns however that the bigger the set of individual contributed to the sum, 
the less  noise has to be added to protect privacy of individuals.  Alternatively, having the same level of privacy one can reveal 
more exact statistics if they refer to a bigger set of individuals. 

In our paper we consider a distributed system that consists of nodes (individuals having some local, possibly sensitive, data) with connections constituting
a graph modeled as a preferential attachment process. This model is believed to be appropriate for a wide spectrum of real systems including social networks. 
We may assume that each individual has a very constrained knowledge about the network limited to some \textbf{(semi)trusted} neighbors.  
We formally prove that using a simple algorithm (adding some extra connections between nodes) one may protect privacy of nodes 
even in the presence of an adversary capable of corrupting significant number of  nodes. What is more important, our algorithm needs only some moderate number of \textbf{local} operations. In particular, the exact topology or even its exact size may remain unknown. Apart from rigid formal analysis we provide experimental results performed on data from real networks. 

In Section~\ref{sect:model} we present a general model with adversary. In Section~\ref{sect:prot} we present our distributed protocols which may be used to enhance privacy (e.g. in social networks). In Sections~\ref{sect:results} and~\ref{sect:exper} we provide analytic and experimental analysis of our protocols. In Section~\ref{conse} we show two examples of privacy amplification using our generic method. Then in Section~\ref{sect:related} we present some related papers and finally in Section~\ref{sect:conclusion} we conclude and outline some interesting problems for future work.


			

\vspace*{-3pt}

 \section{General Adversarial Model}\label{sect:model}

We consider a social network represented as a graph $G$. 
The nodes and edges of $G$ represent users and  friendship relation between pairs of them, respectively.
 
Our model can be directly used for other distributed systems, wherein a privacy preserving data computation problem is considered (e.g., a sensor network or a systems of smart meters) as long as some assumptions typical for social networks  about the network topology are fulfilled.  

The intuition behind our adversarial model is as follows. We assume that the Adversary can corrupt some of the users. Corruption gives the Adversary control over the node, yet we assume that he is an honest-but-curious type of Adversary. Namely, corrupted nodes follow the protocol, but they are trying to learn information about processed data and share all information they have with the Adversary.

The corruption of nodes can either be done in a random way 
or the Adversary can choose an arbitrary subset of nodes to corrupt, knowing the exact structure of the graph (say, he may attack the nodes with the highest degree). 
Note that random corruptions can model scenario when some  users install protecting software and others remain attack-prone. This case also covers 
the situation  with  unexpected failures, without an actual presence of Adversary. 

Note that the Adversary has full access to all information processed by corrupted nodes. 
As we show later, from the perspective of the Adversary  all connections incident to a corrupted node are removed from the graph $G$.

\begin{definition}
We will say that a graph is \textit{$\xi$-strong} if a subgraph induced by its honest nodes has largest connected component of size at least $\xi n$, where $n$ is the number of honest nodes.
\end{definition}

Clearly,  corruption of a significant number of nodes can dramatically decrease the $\xi$-strength. For prevention, we can enrich the graph by adding some edges between users, e.g.,  between some arbitrary user and a friend of his friend. For practical reasons all these operations need to be \textbf{local} (no global topology is known)  
and simple. 


We formally define \textit{Disconnection Game}  with Adversary $\mathcal{A}$ and a distributed protocol $\mathcal{P}$ as follows.  We have a network with underlying undirected graph $G = (V,E)$
This can either be a specific real network graph or e.g., a  randomly generated  scale-free graph. We define Disconnection Game, denoted by $\mathcal{DG}(G,\mathcal{A},\mathcal{P})$ in the following way:
\begin{enumerate}
\item $\mathcal{P}$: the set of edges $E$ is enriched by adding edges chosen  between  pairs of unconnected nodes. Rules of adding  edges depend on  specific game instantiation. 
This resulting graph is  denoted $G_P = (V,E \cup E_P)$, where $E_P$ is the set of edges added after $\mathcal{P}$ was applied. 

\item The Adversary chooses, according to restrictions in this game instantiation, a subset $C$ of nodes. The nodes belonging to $C$ are \textit{corrupted} and removed from the graph with their incident edges denoted by $E_C$. Note that the Adversary knows only the initial graph $G$. 

The resulting graph is denoted  $G_A = (V \setminus C,(E \cup E_P)  \setminus E_C )$.

We assume that $C$ does not depend on the set $E_{P}$.  This assumption reflects the assumption that the adversary does not know choices of uncorrupted nodes.

\item The outcome of the game is the fraction of nodes belonging to the biggest connected component in graph $G_A$.
\end{enumerate}

The model presented in this section is a problem of robustness of the network (see for example \cite{beygelzimer2005improving,zhang2015notion,callaway2000network,zhao2009enhancing}). It is, however, worth mentioning that, unlike previous papers in that field, we require that the enhancing protocol is done in a distributed way and without much knowledge of global topology of the graph. Moreover, we pick rather strong notion of robustness, namely the size of the largest connected component.

If the resulting structure is $\xi$-strong, it means that there exists a structure that is \textbf{not} controlled by the adversary that is connected  and contains 
at least $\xi\cdot n$ out of $n$ nodes. Intuitively, this allows to provide a common response secured  in such way that the adversary cannot observe separate inputs of nodes but the 
aggregated value of a large set of nodes. In Section~\ref{conse} we present references to particular protocols. 




 \vspace{-3pt}
\section{Security-enhancing protocols}\label{sect:prot}

We present two protocols  aimed at improving  $\xi$-strength of the network   and in consequence security of aggregation protocols. We prove their properties 
both in analytic (Sec.~\ref{sect:results}) and experimental (Sec.~\ref{sect:exper}) way for underlying graphs typical for social networks. 
\vspace{-3pt}
\subsection{$m$-Two Steps Friend Finder}


The person who wants to improve his chances of being in the big component 
asks his friend (chosen uniformly at random) to recommend him yet to another friend. Namely, our new friend is a former "friend of a friend" that is added to the list of connections 
(or just a separated contact  used for privacy-preserving actions).
This procedure is iterated  $m$ times, namely ask $m$ randomly chosen friends 
for recommendations. That would result in obtaining (at most) $m$ new friends. Note that sometimes it might happen that a specific "friend of a friend" will be recommended more than once. 

Formally, every node that wants to actively participate in the protocol performs a random walk of length $2$ starting from himself. Note that one could propose different length of the random walk, our choice of length $2$ is to minimize communication and keep the protocol as local as possible. 

Formally the 
 $m$-Two Steps Friend Finder ($m$-2SFF, for shortness) is presented as an Algorithm~\ref{LocalRandomWalk}.

\begin{algorithm}
\DontPrintSemicolon
\caption{$m$-2SFF}\label{LocalRandomWalk}
\ForEach{node $v$}{
      \For{ $m$ times}{
				1. Choose node $w$ uniformly at random from $N(v)$.\;
				2. Query $w$ to get id of its neighbor.\;
				3. Node $w$ chooses $u$ uniformly at random from $N(w)$ and sends its id to $v$.\;
				4. Create edge $(v,u)$.\;
      }
  }
\end{algorithm}

Note that $m$-2SFF can be performed by a node without any knowledge of the underlying graph, except its neighbors. 
Moreover, it can be done in a fully distributed manner, with $O(mn')$ messages  sent in the network, where $n'\leq n$ is the number of nodes participating in the protocol. 



\vspace{-3pt}

\subsection{$m$-Ask Fat For a Friend}

The approach in this protocol is substantially different. Here we want to rely on the preferential attachment properties of real networks.
In particular, we assume that there is a commonly known list of a few nodes with highest degrees. We will call them \textit{fat nodes}. In real life situation we might think that there are a few well-known and somewhat trusted parties in the distributed system. 


Existence of such fat nodes is typical for structures governed by  preferential attachment  model (a.k.a. "rich get richer"). Note that  there is a vast research in this kind of models and it turns out that complex, real life networks tend to exhibit such  properties.

$m$-Ask Fat For a Friend ($m$ -A3F)  goes as follows. Every node that wants to improve its chance to belong to the big component 
 has to choose uniformly at random one fat node from the common list  and ask for an address of one of its neighbors chosen at random. 
Formally, $m$ -A3F protocol is presented in Algorithm~\ref{AskRichNode} and~\ref{AskRichNode2}.

\begin{algorithm}
\DontPrintSemicolon
\caption{$m$ -A3F (code for a regular node)}\label{AskRichNode}
\ForEach{node $v$}{
      \For{ $m$ times}{
				1. Choose node $w$ at random from the common list of fat nodes .\;
				2. Query $w$ to get id of its neighbor.\;
				3. Node $w$ chooses $u$ uniformly at random from $N(w)$ and sends its id to $v$.\;
				4. Create edge $(v,u)$.\;
      }
  }
\end{algorithm}

\begin{algorithm}
\DontPrintSemicolon
\caption{$m$ -A3F (code for a fat node)}\label{AskRichNode2}
\ForEach{fat node $w$}{
	\If{queried by node $v$}{
			 Reply with $u$ chosen  at random from $N(w)$.\;				
		}
}
\end{algorithm}

Using a list of 'fat' nodes may be perceived as a bottleneck of the protocol, yet one should easily realize that in many real life cases the fat node has significantly more resources.
 Think about the case where the network is the WWW and 'fattest' nodes are e.g., Google, Yahoo or Facebook. Moreover a fat node does not participate in further communication. It just contacts two nodes so that they can establish an independent connection. 

In the next Sections we show that using fat nodes for finding friends substantially improves the immunity of the graph even facing a massive attack of the adversary. 



 \section{Analytic results}\label{sect:results}
In this Section we analyse a specific, most interesting case of our protocols in a general model. Other cases are also considered in the next Section. Let us analyse the $\log{n}$-A3F with Adversary knowing the topology of graph $G$ in advance thus attacking the nodes with the highest degree. We consider $G=(V,E)$ to be preferential attachment graph having some properties that can be met in real-life networks. One of such properties is existence (whp) of a group of vertices having high (in some sense) degrees. Their neighborhood covers whp the linear number of vertices from $V$.

Thus, let us assume throughout this Section the following. Let $W \subset V$ be the subset of vertices whose degrees vary from $an/\log{n}$ to $bn/\log{n}$ for some constants $a, b$, $|W| = C \log{n}$ for some constant $C$. $W$ is the set of the fat nodes from our protocol and, at the same time, the set of vertices that will be corrupted by Adversary. By $N_W$ we denote the neighborhood of $W$ without vertices from $W$, thus $N_W = \bigcup_{i=1}^{|W|} N(w_i) \setminus W$, where $N(w_i)$ is the neighborhood of $w_i$. We assume also that $|V\setminus (W \cup N_W)| = \alpha n$ for some constant $0 < \alpha < 1$. Let $V_{\alpha} = V\setminus (W \cup N_W)$. We will use the well known fact about the Erd{\"o}s-Reny{\'i} $G(n,p)$ model (see for example \cite{bollobas1998random}), namely that whenever $p\geq (1+\varepsilon)\log{n}/n$ for some $\varepsilon>0$, then whp $G(n,p)$ is connected.


First, let us consider the case in which all vertices want to participate in the $\log{n}$-A3F Protocol.

\begin{theorem}
\label{th_all}
If $C a <1-\alpha$ then after executing $\log{n}$-A3F for all vertices in $G=(V,E)$ we obtain $G_A = (V\setminus W, (E \cup E_P)\setminus E_C)$ which is whp \textit{$1$-strong}. (Recall that $E_P$ is the set of edges added during the protocol execution and $E_C$ is the set of edges incident to vertices from $W$.)
\end{theorem}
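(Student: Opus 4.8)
The plan is to show that the graph $G_A$ that remains once the adversary deletes $W$ is whp connected. Its vertex set is exactly $V':=N_W\cup V_\alpha$, of size $n-C\log n=(1-o(1))n$, so ``connected'' is the same as ``$1$-strong''. I would split this into two parts: \textbf{(A)} the subgraph of $G_A$ induced on $N_W$ is whp connected, and \textbf{(B)} whp every vertex of $V_\alpha$ is joined by a protocol edge to some vertex of $N_W$. Together these immediately give that $G_A$ is connected, since then all of $V_\alpha$ and all of $N_W$ lie in one component.

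Part (B) is a routine union bound. Fix $v\in V_\alpha$. In each of its $\log n$ rounds of $\log n$-A3F, $v$ queries a fat node $w$ and receives a uniformly random neighbour of $w$; that neighbour fails to lie in $N_W$ only when it is itself a fat node, which (since $\deg(w)\ge an/\log n$ and $|W|=C\log n$) has probability at most $C(\log n)^2/(an)=o(1)$. Hence the chance that all $\log n$ rounds of $v$ miss $N_W$ is at most $(o(1))^{\log n}=n^{-\omega(1)}$, and a union bound over the $\le n$ vertices of $V_\alpha$ leaves total failure probability $o(1)$.

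For part (A) I would start from the one structural identity that drives everything: for \emph{any} regular node, a single protocol round creates the edge to a given vertex $u$ with probability exactly $p_u:=\sum_{w\in W,\,w\sim u}\frac{1}{|W|\,\deg(w)}$, independently of the querying node. Hence $\sum_u p_u=1$, and the mass on fat nodes is $\sum_{u\in W}p_u\le|W|\cdot\frac{\log n}{an}=o(1)$, so $\{p_u\}_{u\in N_W}$ carries $1-o(1)$ of the total mass. Consequently the protocol edges inside $N_W$ form an inhomogeneous random graph in which each vertex independently sends out $\log n$ edges, the target of each being $u\in N_W$ with probability $\approx p_u$; for $u\neq v$ in $N_W$ the edge $\{u,v\}$ is thus present with probability $\pi_{uv}=1-(1-p_u)^{\log n}(1-p_v)^{\log n}$, and after a Poissonisation (which decouples the $\log n$ targets of a fixed vertex) these events become mutually independent over the pairs. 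A short computation from $an/\log n\le\deg(w)\le bn/\log n$, $|W|=C\log n$, $|N_W|=(1-\alpha)n-C\log n$ and the hypothesis $Ca<1-\alpha$ shows $\min_{u\neq v\in N_W}\pi_{uv}\ge(1+\varepsilon)\frac{\log|N_W|}{|N_W|}$ for some fixed $\varepsilon>0$; invoking the quoted connectivity threshold for $G(n,p)$ — or, equivalently, a union bound over vertex cuts $S\subsetneq N_W$ that needs only the one-sided estimate $\Pr[\text{no edge leaves }S]\le\prod_{u\in S,\,v\notin S}(1-\pi_{uv})$ — then yields that the $N_W$-induced subgraph is whp connected.

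I expect the main obstacle to be part (A), in two places: first, transferring the quoted $G(n,p)$ connectivity fact to the \emph{inhomogeneous}, only-Poisson-independent edge model that the protocol actually induces on $N_W$ (redoing the threshold estimate with the $\pi_{uv}$ in place of a single $p$, and de-Poissonising); and second, the density bound $\min_{u,v}\pi_{uv}\ge(1+\varepsilon)\log|N_W|/|N_W|$, which is exactly where the assumption $Ca<1-\alpha$ is consumed: it is what keeps $|N_W|$ — hence the connectivity threshold — small enough relative to the guaranteed number of protocol edges arriving in $N_W$ for the comparison to close. Part (B) and the final gluing step are comparatively soft.
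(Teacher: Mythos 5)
Your proposal follows essentially the same route as the paper's proof: the same decomposition of $V\setminus W$ into $N_W$ and $V_\alpha$, the same per-round probability $\frac{1}{|W|\deg(w)}$ of hitting a given neighbour of a fat node, the same comparison of the protocol edges inside $N_W$ to an Erd\H{o}s--R\'enyi graph above the connectivity threshold (with the hypothesis $Ca<1-\alpha$ consumed in exactly the same place), and the same union bound attaching every vertex of $V_\alpha$ to $N_W$. The only divergence is that you handle the dependence among the $\log n$ targets of a fixed vertex explicitly via Poissonisation, whereas the paper simply asserts that the induced graph on $N_W$ ``decomposes into'' a $G(N_W,p)$ plus an independent remainder and inherits its monotone properties --- your version is the more careful rendering of the identical argument, not a different proof.
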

\begin{proof}
Note that the set of vertices of $G_A$ satisfies $V\setminus W = N_W \cup V_{\alpha}$  and $N_W$ and $V_{\alpha}$ are disjoint.
First, let us concentrate on the set $N_W$. Let $u,v \in N_W$. Let $i$ be such that $\{w_i, v\} \in E$. Let us estimate the probability that there exists an edge $\{u,v\}$ (denote this event by $[u \leftrightarrow v]$). Let $[u \rightarrow\ v]$ denote the event that $u$ established an edge $\{u,v\}$ during the protocol.
For some $\varepsilon>0$ and sufficiently big $n$ we get
\begin{equation}
\begin{split}
\label{l_bound_uv}
\Pra[ & u \rightarrow v] \geq 1-\left(1-\frac{1}{C \log{n}}\frac{1}{deg(w_i)}\right)^{\log{n}} \geq \\
& 1-\left(1-\frac{1}{C \log{n}}\frac{\log{n}}{a n}\right)^{\log{n}} \geq \\
& 1-e^{-\frac{\log{n}}{C a n}} \geq \frac{\log{n}}{C a n+\log{n}}\geq (1+\varepsilon)\log{(|N_W|)}/|N_W|.
\end{split}
\end{equation}
Note that $1/(C \log{n}~deg(w_i))$ is the lower bound for the probability that $v$ establishes an edge $\{v,u\}$ in a single step of the protocol. Indeed, $w_i$ does not need to be the only neighbor of $u$ in $W$. The second inequality follows from the bounds for $deg(w_i)$.
The third inequality follows from the fact that $(1-1/x)^x$ converges to $1/e$ from below for $x>0$, the fourth one from the fact that $e^x \geq 1+x$. The last inequality follows because $C a<1-\alpha$ and $|N_W| = (1-\alpha)n-C\log{n}$. Since each vertex creates new edges during the protocol independently from other vertices, we have $\Pra[ u \leftrightarrow v] = \Pra[u \rightarrow v] + \Pra[v \rightarrow u] - \Pra[u \rightarrow v]\Pra[v \rightarrow u]$. Of course, the lower bound (\ref{l_bound_uv}) is true also for $\Pra[v \leftrightarrow u]$ for all $u,v \in N_W$. We can think that the subgraph of $G$ induced on $N_W$ (denote it by $G(N_W)$) decomposes into Erdos-Renyi $G(N_W,p)$, where $p \geq (1+\varepsilon)\log{(|N_W|)}/|N_W|$, and some other random graph. Thus $G(N_W)$ will inherit some monotone properties of $G(N_W,p)$, among others, it will be connected whp. Since Adversary corrupts the nodes with the highest degrees, namely the whole set $W$, all the vertices from $N_W$ will stay in $G_A$. Thus we have proved the existance (whp) of a giant component (which contains $G(N_W)$) of size at least $|N_W| = (1-\alpha)n-C\log{n}$ in $G_A$.

Now, let us concentrate on the set $V_{\alpha}$. Let us estimate the probability that a vertex $v \in V_{\alpha}$ is not connected with $G(N_W)$ (denote this event by $[v \not \leftrightarrow G(N_W)]$). What needs to happen is that whenever the fat node sends to $v$ the id of $u$, $u$ needs to be a fat node as well. Since there are $C \log{n}$ fat nodes and their degrees are at least $a n/\log{n}$, we obtain
\[
\Pra[v \not \leftrightarrow G(N_W)] \leq \left(\frac{1}{C \log{n}} \frac{C \log{n}}{a n/\log{n}}\right)^{\log{n}} = \left(\frac{\log{n}}{a n }\right)^{\log{n}}.
\]
Vertices from $V_{\alpha}$ act during the protocol independently and the above probability is so small that we can simply estimate the probability that all vertices from $V_{\alpha}$ are connected with $G(N_W)$ (denote this event by $[V_\alpha \leftrightarrow G(N_W)]$) and show that it happens whp:
\[
\Pra[V_\alpha \leftrightarrow G(N_W)] \geq \left(1-\left(\frac{\log{n}}{a n }\right)^{\log{n}}\right)^{\alpha n} \xrightarrow {n \rightarrow \infty} 1.
\]
Thus whp $G_A$ is connected.
\end{proof}

The above theorem gave us a very strong result however its assumption about the number of vertices taking part in the protocol was also very strong. Now, let us discuss the following case: $\beta$ fraction of vertices from $V_{\alpha}$ and $\gamma$ fraction of vertices from $N_W$ take part in the protocol. (We don't care about vertices from $W$ because they are going to be corrupted and their incident edges will not appear in $G_A$ eventually).

\begin{theorem}
If $C a <1-\alpha$ and $C b > \gamma(1-\alpha)$ then after executing $\log{n}$-A3F for vertices as described above on $G=(V,E)$ we obtain $G_A = (V\setminus W, (E \cup E_P)\setminus E_C)$ which is whp \textit{$(1-(1-\beta)\alpha)$-strong}.
\end{theorem}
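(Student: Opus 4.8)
The plan is to follow the architecture of the proof of Theorem~\ref{th_all}, splitting the honest vertex set $V\setminus W = N_W \cup V_\alpha$ into the same two pieces and tracking what changes when only a $\gamma$-fraction of $N_W$ and a $\beta$-fraction of $V_\alpha$ actually run the protocol. Write $N_W^{\gamma}$ and $V_\alpha^{\beta}$ for the active parts, so that $|N_W^{\gamma}| = \gamma\bigl((1-\alpha)n - C\log n\bigr)$ and $|V_\alpha^{\beta}| = \beta\alpha n$. I want to show that whp $G_A$ contains a connected component consisting of (essentially all of) $N_W$ together with $V_\alpha^{\beta}$; its size is then $|N_W|+|V_\alpha^{\beta}| = (1-\alpha)n+\beta\alpha n - C\log n = \bigl(1-(1-\beta)\alpha\bigr)n - o(n)$, which is the claimed $\bigl(1-(1-\beta)\alpha\bigr)$-strength. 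The passive part of $V_\alpha$, of size $(1-\beta)\alpha n$, is exactly what is sacrificed: in the worst case those vertices are incident in $G$ only to nodes of $W$ and hence become isolated in $G_A$, which is the source of the $(1-\beta)\alpha$ loss.

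First I would redo the key estimate~(\ref{l_bound_uv}) for the $N_W$ part. For $u,v\in N_W$ with at least one of them active, an active endpoint still issues $\log n$ queries, each of which picks a fat neighbour of the other endpoint with probability at least $1/(C\log n)$ and is then forwarded to that endpoint with probability at least $\log n/(an)$, so the per-pair connection probability is still of order $\log n/(Can+\log n)$, exactly as in Theorem~\ref{th_all}. What changes is that the population whose internal connectivity we now need is $N_W^{\gamma}$, of size $\gamma(1-\alpha)n - o(n)$, so the threshold to clear is $(1+\varepsilon)\log(\gamma(1-\alpha)n)/(\gamma(1-\alpha)n)$ rather than the one for $|N_W|$. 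This is where both hypotheses enter: $Ca<1-\alpha$ plays the same role as in Theorem~\ref{th_all}, and the extra requirement $Cb>\gamma(1-\alpha)$ ensures that the active fraction of $N_W$ is still large enough — relative to the total fat degree, which is $\asymp Cbn$ and governs the re-wiring — for the estimate to beat this threshold. As in Theorem~\ref{th_all}, the subgraph of $G_A$ induced on $N_W$ then decomposes into a copy of $G(N_W^{\gamma},p)$, with $p$ above the connectivity threshold, and another random graph, so it inherits connectivity of its active core whp; a passive vertex of $N_W$ is swept into this core because it is a neighbour of some corrupted fat node $w_i$ and is therefore hit whp by at least one of the $\asymp\gamma\deg(w_i)$ active vertices that query $w_i$, so at most an $o(n)$ fraction can be left out.

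Second, the $V_\alpha$ part is essentially verbatim from Theorem~\ref{th_all}: for an active $v\in V_\alpha$, the probability that each of its $\log n$ queries is forwarded to a fat node, and so fails to connect $v$ to $G(N_W)$, is at most $\bigl(\log n/(an)\bigr)^{\log n}$, which is super-polynomially small, so a union bound over the at most $\alpha n$ active vertices of $V_\alpha$ shows that whp all of $V_\alpha^{\beta}$ attaches to the $N_W$-component. Combining the two parts, whp $G_A$ has a connected component of size at least $|N_W|+|V_\alpha^{\beta}| - o(n) = \bigl(1-(1-\beta)\alpha\bigr)n - o(n)$, giving the stated strength.

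The genuinely delicate step is the first one. In Theorem~\ref{th_all} all of $N_W$ was active and its induced subgraph trivially sat above the connectivity threshold, whereas here one must (i) check the sharper inequality that the sparser active set $N_W^{\gamma}$ is still connected, which is exactly what $Cb>\gamma(1-\alpha)$ is designed to yield, and (ii) verify that the passive $N_W$ vertices do not splinter off, i.e. that the number of $N_W$ vertices missed by every active query is negligible. Both reduce to the same exponential tail bounds used above, but the bookkeeping over the (possibly very uneven) distribution of $W$-degrees inside $N_W$ needs care; the $V_\alpha$ half, by contrast, is routine given Theorem~\ref{th_all}.
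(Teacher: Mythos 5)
Your proposal follows the paper's architecture exactly --- the same split $V\setminus W=N_W\cup V_\alpha$, the same Erd\H{o}s--R\'enyi comparison for the $N_W$ part, and an essentially verbatim $V_\alpha$ step --- but it misplaces the hypothesis $Cb>\gamma(1-\alpha)$, and the step you build on it does not go through. In the paper that hypothesis is spent on the \emph{passive-attachment} half of the $N_W$ argument: for $v\in N_W\setminus\tilde N_W$ with fat neighbour $w_i$, each of the $\gamma|N_W|\log n$ queries issued by the active part returns $v$ with probability at least $1/(C\log n\,\deg(w_i))\ge 1/(Cbn)$, so $\Pra[v\not\leftrightarrow \tilde N_W]\le(1-1/(Cbn))^{\gamma|N_W|\log n}\approx n^{-\gamma(1-\alpha)/(Cb)}$, and the relation between $Cb$ and $\gamma(1-\alpha)$ is what controls the resulting union bound over the $(1-\gamma)|N_W|$ passive vertices. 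Your version of this step (being ``hit by one of the $\asymp\gamma\deg(w_i)$ active vertices that query $w_i$'') miscounts --- the relevant quantity is the expected number of times $v$ is returned, $\approx\gamma|N_W|/(C\deg(w_i))$ --- although your weaker ``all but $o(n)$ attach'' conclusion would survive a first-moment argument.

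The genuine gap is the step you yourself flag as delicate: connectivity of the thinned active core $N_W^{\gamma}$. The pairwise connection probability between two vertices of $N_W$ with an active endpoint is lower bounded, exactly as in (\ref{l_bound_uv}), by roughly $\log n/(Can)$; the total fat degree $\asymp Cbn$ never enters this lower bound (only the degree of the particular fat neighbour of the target does), so $Cb>\gamma(1-\alpha)$ cannot ``ensure the estimate beats the threshold.'' To clear the connectivity threshold $(1+\varepsilon)\log(\gamma(1-\alpha)n)/(\gamma(1-\alpha)n)$ of an Erd\H{o}s--R\'enyi graph on the $\gamma(1-\alpha)n$ active vertices you would need $Ca<\gamma(1-\alpha)$ (up to the $(1+\varepsilon)$ factor), which is strictly stronger than the stated $Ca<1-\alpha$ once $\gamma<1$ and is not implied by the theorem's hypotheses: take $\alpha=1/2$, $\gamma=1/10$, $Ca=Cb=2/5$; both hypotheses hold, yet $Ca=2/5\gg 1/20=\gamma(1-\alpha)$. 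So the E--R comparison on the active core alone does not close the argument under the stated assumptions. (To be fair, the paper is itself silent here --- it attaches the passive vertices to $G(\tilde N_W)$ and then asserts that ``again $G(N_W)$ is connected'' without re-verifying connectivity of $\tilde N_W$ --- but your proposal makes that claim explicit and justifies it with the wrong hypothesis; filling the hole would require something extra, e.g.\ strengthening the assumption to $Ca<\gamma(1-\alpha)$ or exploiting the additional edges contributed by active $V_\alpha$ vertices and by passive--active pairs in $N_W$.)
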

\begin{proof}
Let $\tilde{N}_W$ denote the set of vertices from $N_W$ which take part in the protocol ($|\tilde{N}_W| = \gamma|N_W|$). Even though the vertices from $N_W \setminus \tilde{N}_W$ do not take part in the protocol, they can be chosen as those to whom vertices from $\tilde{N}_W$ establish new edges. Let us estimate the probability that $v \in N_W \setminus \tilde{N}_W$ will not get connected to any vertex from $\tilde{N}_W$ during the execution of the protocol (denote this event by $[v \not\leftrightarrow G(\tilde{N}_W)]$). Let $i$ be such that $w_i$ and $v$ are neighbors in $G$. We have
\[
\begin{split}
\Pra[ v \not\leftrightarrow & G(\tilde{N}_W)] \leq \left(1-\frac{1}{C \log{n}}\frac{1}{deg(w_i)} \right)^{\gamma|N_W|\log{n}} \leq \\
& \left(1-\frac{1}{C b n} \right)^{\gamma|N_W|\log{n}} \leq e^{-(\gamma \log{n} |N_W|)/(C b n)} = \\
& n^{-\gamma(1-\alpha)/(C b)} n ^{\log{n}/(b n)}
\end{split}
\]
(compare \ref{l_bound_uv}).

Now, let us estimate the probability that all vertices from $(N_W \setminus \tilde{N}_W)$ are going to be connected with $G(\tilde{N}_W)$ (denote this event by $[(N_W \setminus \tilde{N}_W) \leftrightarrow G(\tilde{N}_W)]$). We get
\[
\begin{split}
\Pra[&(N_W \setminus \tilde{N}_W) \leftrightarrow G(\tilde{N}_W)]] \geq \\
& \left(1-n^{-\gamma(1-\alpha)/(C b)} n ^{\log{n}/(b n)} \right)^{(1-\gamma)|N_W|} = \\
& \left(1-n^{-\gamma(1-\alpha)/(C b)} n ^{\log{n}/(b n)} \right)^{(1-\gamma)((1-\alpha)n - C \log{n})} \\
& \xrightarrow{n \rightarrow \infty} 1
\end{split}
\]
since $C b > \gamma(1-\alpha)$. Thus again whp $G(N_W)$ is connected.

By calculations analogous to those from Theorem \ref{th_all} we also get that all vertices from $V_{\alpha}$ which participate in the protocol (denote this set by $\tilde{V_{\alpha}}$) are connected with $G(N_W)$ whp. We proved that whp $G_A$ has a giant component containing $N_W \cup \tilde{V_{\alpha}}$, $|N_W \cup \tilde{V_{\alpha}}| = (1-\alpha)n - C \log{n} + \beta\alpha n$. This completes the proof.

\end{proof}
 \section{Experimental results}~\label{sect:exper}

We present experimental results conducted on \textbf{real} data of Epinions social network collected in SNAP dataset by Stanford University (see ~\cite{snapnets} and~\cite{richardson2003trust}). 

This is a who-trust-whom online social network of a a general consumer review site Epinions.com. Members of the site can decide whether to ''trust'' each other. All the trust relationships interact and form the Web of Trust which is then combined with review ratings to determine which reviews are shown to the user. Our network has $75879$ nodes and $508837$ edges where nodes denote users of Epinions.com site and edges denote trust relation.



\subsection{Random Failures}

Random failures is a widely used model in network robustness but also fault tolerance (see~\cite{Hubercik}) literature. 
We assume that corrupted nodes (or in other words, nodes which are prone to failure) are distributed in a uniform way across the whole network.




\subsubsection{$m$-2SFF Protocol}

First let us concentrate on the $m$-2SFF Protocol in the case of Random Failures. Initially we assume that all nodes launch  the $m$-2SFF Protocol, namely each node does $m$ random walks of length $2$ to establish extra connections.  Obviously, the larger $m$, the better safety of the nodes.

In Figure~\ref{fig:localRandom} we show how the $m$-2SFF Protocol performs on Epinions social network graph under Random Failures model. We can see how the network behaves without any enrichment, and with $m = 1,5,10,15$. Note that on the x-axis we have the percentage of corrupted nodes. With $m=15$ walks, around $70\%$ of remaining nodes are in the single  \textbf{giant} connected component. Note that the edges are added \textbf{before} the corruption phase. Therefore, for each remaining node, a lot of added neighbors are corrupted and therefore useless. On the positive side, one can easily see that for up to around $20\%$ failures, even $5$ walks are sufficient to have almost every node belonging to the giant component.

\begin{figure}[h!]
    \centering
    \includegraphics[width=0.5\textwidth]{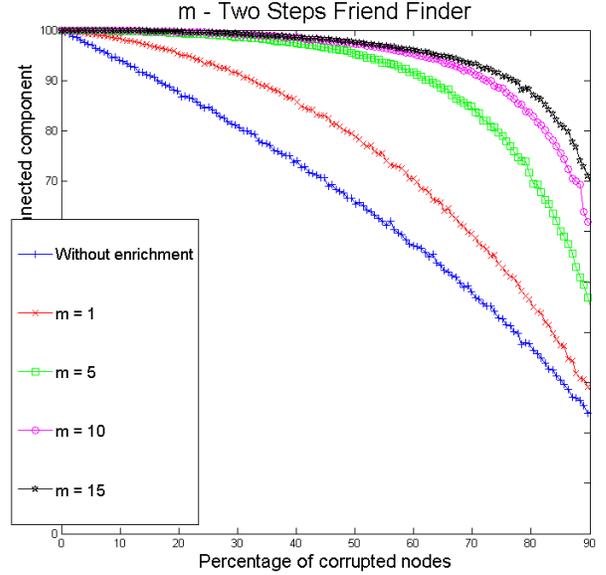}
    \caption{$m$-2SFF under Random Failures model.}
    {\label{fig:localRandom}}
\end{figure}

Despite these somewhat optimistic results, it is quite unrealistic to assume that all users want to participate. 
We want to weaken this assumption. 
We still demand high level of security, at least for the participating users. In the Figure~\ref{fig:partialLocalRandom} we show some experimental results when a part of nodes participates, only. Here we assume $m=15$ and $q=0.1, 0.25, 0.5$ fraction of participating nodes. That is,  $q\cdot n $ nodes  participate in $15$-2SFF protocol. Then we are interested what is the fraction of participating users that belong to the biggest  component and how it compares to the situation when all users do participate.

\begin{figure}[h!]
    \centering
    \includegraphics[width=0.5\textwidth]{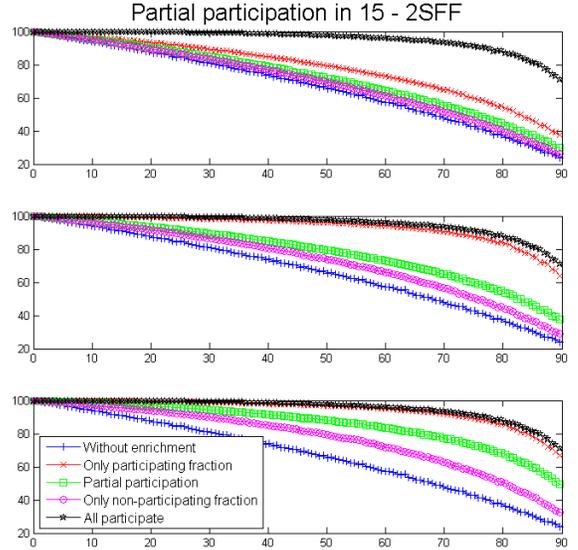}
    \caption{$15$-2SFF under Partial Participation and Random Failures model. The top figure shows $10\%$ participation, middle shows $25\%$ participation and bottom $50\%$ participation.}
    {\label{fig:partialLocalRandom}}
\end{figure}

Note that in the case where $q=0.1$ there is a significant decrease of security. Namely, with massive number of failures, we have around $30\%$ nodes in biggest  component in comparison to $70\%$ in the full participation case. Note that even if we consider only the subset of participating nodes, then the fraction of nodes belonging to biggest  component amongst them is below $40\%$. The security indeed improves with greater $q$, yet still even if we consider only the participating nodes, the results are significantly worse than when all users participate.
Thus this protocol turned to be useful in communities if we know that strong majority  of nodes is willing to use it. 

\subsubsection{$m$- A3F Protocol}

Now we focus on the $m$-A3F protocol  under Random Failures model. Again, we initially assume that all nodes participate in the protocol, namely each node does $m$ queries which consist of randomly choosing one of the fat nodes  and asking for randomly chosen neighbor of that node. 
Here we fixed the number of the nodes considered fat for $\left\lfloor\log(n)\right\rfloor = 16$. It means that 16 nodes which have the highest degree in the initial graph are 
on the common list of 'fat nodes`.

In Figure~\ref{fig:influentialNode} one can see the performance of A3F on Epinions social network graph under Random Failures model. Similarly as before, we show the behavior of the network without any enrichment, and with $m = 1,5,10,15$. This time, with $m=15$ queries, almost $90\%$ of remaining nodes are in the giant component despite of a large  number of failures. Another interesting thing to observe is that the cutoff (moment when the fraction of nodes in the giant component begins to decrease significantly) appears much farther. For example, in case of $15$-2SFF we see that the size of the giant component starts to deteriorate since approximately $30\%$ failures, before this threshold it remains very  close to $100\%$. In the case of 2SFF, on the other hand, for $m=15$ the cutoff appears as far as $70\%$ failures and before such a massive corruption of nodes, it remains negligibly close to $100\%$.

\begin{figure}[h!]
    \centering
    \includegraphics[width=0.5\textwidth]{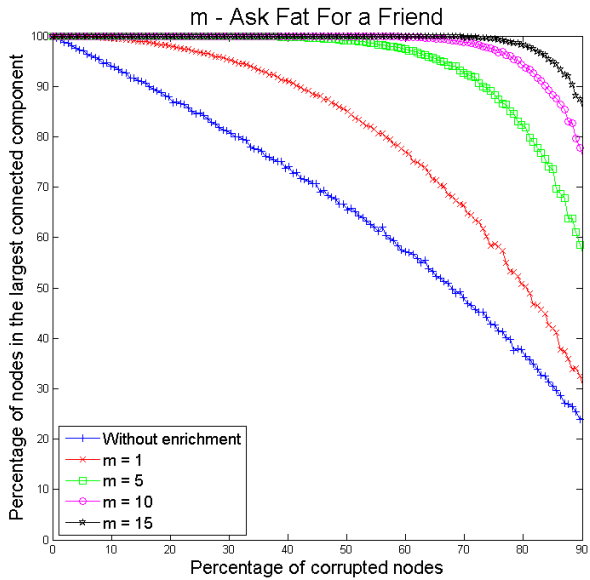}
    \caption{$m$-A3F under Random Failures model.}
    {\label{fig:influentialNode}}
\end{figure}

Again we are interested in the performance of A3F in the case where only a fraction of users wants to participate. We assumed $m=15$ and $q=0.1, 0.25, 0.5$ participation. In Figure~\ref{fig:partialInfluentialNode} we have shown the results for 2SFF with partial participation. 

\begin{figure}[h!]
    \centering
    \includegraphics[width=0.5\textwidth]{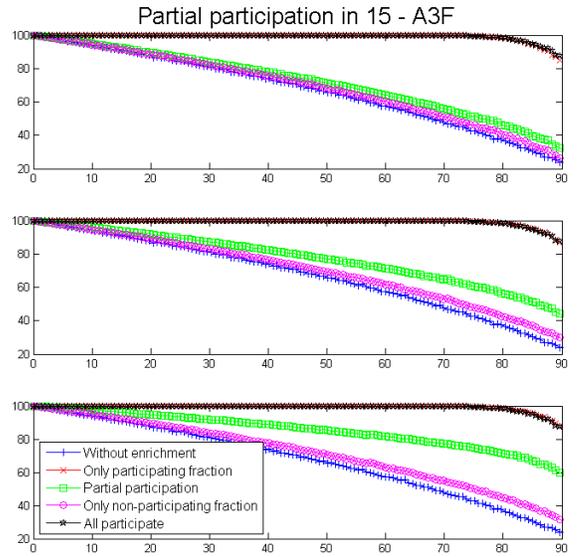}
    \caption{$15$-A3F under Partial Participation and Random Failures model. The top figure shows $10\%$ participation, middle shows $25\%$ participation and bottom $50\%$ participation.}
    {\label{fig:partialInfluentialNode}}
\end{figure}

The most interesting thing is the fact that the safety level amongst the participating nodes in case of partial participation is virtually the same as the safety level when all nodes participate. This fact is very important from the practical point of view. It gives the users a choice - whether they want to sacrifice their safety and not participate in the protocol, or participate in the protocol and be safe no matter what other users choose as long as at least some fraction (say $10\%$) decides to participate in the protocol.

\subsubsection{Comparison}
A glance at the figures in this subsection is enough to see that $m$-A3F performs better than $m$-2SF under Random Failures regime. See for example that for $m=10$ and for $90\%$ failures the A3F protocol gives approximately $80\%$ nodes belonging to the giant component, while 2SFF gives only $60\%$. Moreover, for  the cutoff and therefore non-negligible deterioration of the fraction of nodes in the biggest component appears for greater fraction of failures than in $m$-2SF protocol.

Intuitively, these differences in the results stem from the fact that in A3F we leverage naturally emerging preferential attachment models in real, complex networks, while 2SFF does not really utilize this fact. Connecting to neighbors of fixed, high-degree set of nodes massively improves robustness of real networks.

\subsection{Targeted Adversary}

In this subsection we present experiments conducted under far stronger Adversary
that can corrupt nodes of the highest degree. Namely, if the Adversary has to corrupt $k$ nodes, she sorts the list of nodes by degree and corrupts first $k$ of them. 

Note that the Adversary only has access to the initial graph, without enrichment. 
Obviously, for a specific instance of the graph one could possibly devise a more clever way of attack, 
however this strategy seems to be optimal in general. 
Note that complex network which resemble preferential attachment features are extremely prone to such attacks. 


\subsubsection{$m$-2SFF protocol}


In Figure~\ref{fig:targetedAttackLocalRandom} we show how $m$-2SFF performs on Epinions social network graph under Targeted Adversary model. We can see how the network behaves without any enrichment, and with $m = 1,5,10,15$. Note that on the x-axis we have the percentage of corrupted nodes and this time it ranges from $0$ to $30\%$ instead of $0-90\%$ due to the Adversary's strength. Note that without enrichment the fraction of nodes in the biggest\footnote{Note that from graph-theory perspective we have in this case a \textit{giant component} - a single component that contains a fraction of all nodes} component dramatically falls to almost $0$ for $20\%$ failures. In other words, if the Adversary destroys $20\%$ nodes of highest degree, the remaining graph consists only of very small components. On the other hand, see that for up to $5\%$ corruptions the $m=15$ walks version gives almost $100\%$ nodes belonging to the biggest component. Even for $30\%$ corruption the fraction of nodes in the biggest component is considerably large (approximately $60\%$). Recall that without enrichment under such a strong adversary there is virtually no giant component whatsoever.

\begin{figure}[h!]
    \centering
    \includegraphics[width=0.5\textwidth]{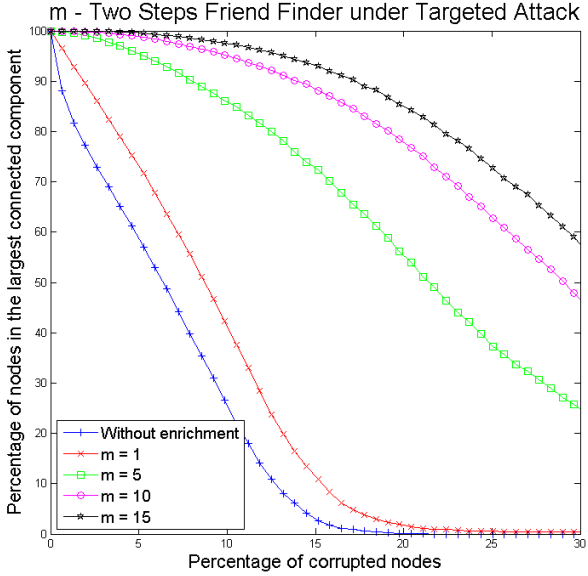}
    \caption{$m$-2SFF under Targeted Attack.}
    {\label{fig:targetedAttackLocalRandom}}
\end{figure}

Let us investigate the protocol if we assume that only a fraction of non-corrupted users participate actively. We assumed $m=15$ and $q=0.1, 0.25, 0.5$ participation. In Figure~\ref{fig:partialTargetedAttackLocalRandom} we have shown the results for $m$-2SFF with partial participation under Targeted Adversary regime. 

\begin{figure}[h!]
    \centering
    \includegraphics[width=0.5\textwidth]{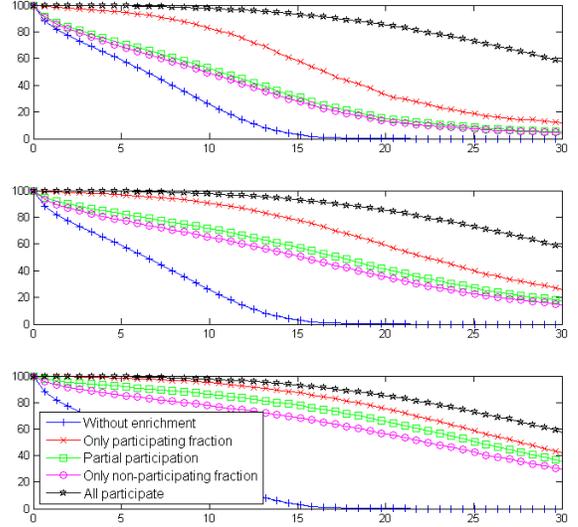}
    \caption{Partial $15$-2SFF under Targeted Attack. The top figure shows $10\%$ participation, middle shows $25\%$ participation and bottom $50\%$ participation.}
    {\label{fig:partialTargetedAttackLocalRandom}}
\end{figure}

An interesting difference between the results for this model and Random Failures can be seen in this figure. Namely, the fraction of nodes belonging to the giant component amongst those who participate is only slightly greater than amongst those who do not participate. This is highly undesired, as it gives no notion of improvement and benefit of participating actively in the protocol. A node could decide that it is pointless to waste precious resources and rather hope that the others would participate actively. See that even if half of the users actively participate, the fraction of nodes in the giant component are significantly smaller than when all nodes participate.

\subsubsection{$m$-A3F protocol}

After somewhat unsatisfying results for $m$-2SFF under Targeted Adversary, we will now present experiments on the $m$-A3F protocol. As before, let us first assume that all nodes participate in the protocol.

In Figure~\ref{fig:targetedAttackInfluentialNode} one can see the performance of $m$-A3F on Epinions social network graph under Targeted Adversary model. As previously, we show the behavior of the network without any enrichment, and for the cases where $m = 1,5,10,15$. This time, with $m=15$ queries, approximately $85\%$ of remaining nodes are in the biggest component for up to $30\%$ corruptions and over $95\%$ of nodes are in the giant component for up to $15\%$ corruptions. Another interesting thing to observe is that the cutoff again appears for greater number of corruptions. For example, in case of $15$-2SFF we see that the size of the giant component starts to deteriorate since approximately $5\%$ failures, before this threshold it remains close to $100\%$. In the case of $15$-A3F, on the other hand, the cutoff appears as far as at $10\%$ failures.

\begin{figure}[h!]
    \centering
    \includegraphics[width=0.5\textwidth]{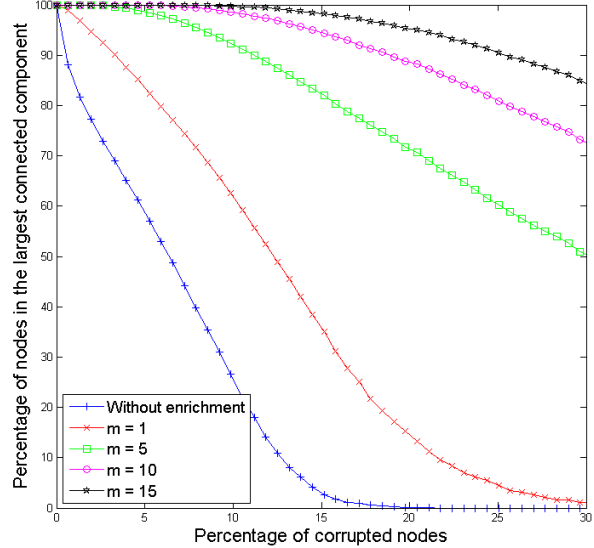}
    \caption{$m$-A3F under Targeted Attack.}
    {\label{fig:targetedAttackInfluentialNode}}
\end{figure}

Similarly as in the previous subsection, we want to see how the protocol behaves if we assume that only a fraction of non-corrupted users participate actively. We assumed $m=15$ and $q=0.1, 0.25, 0.5$ participation. In Figure~\ref{fig:partialTargetedAttackInfluentialNode} we show the results for A3F with partial participation under Targeted Adversary regime. 

\begin{figure}[h!]
    \centering
    \includegraphics[width=0.5\textwidth]{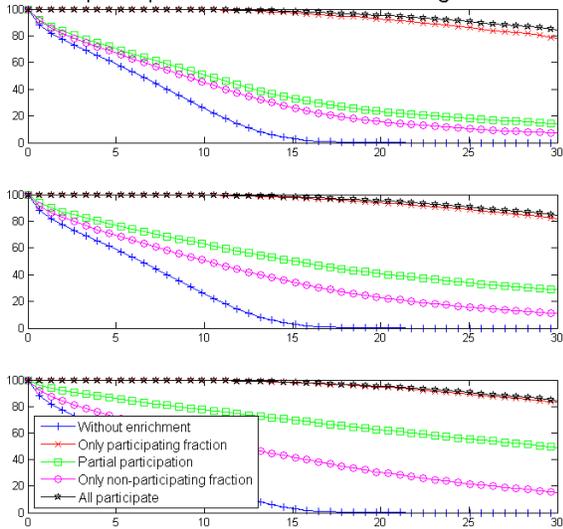}
    \caption{Partial $15$-A3F under Targeted Attack. The top figure shows $10\%$ participation, middle shows $25\%$ participation and bottom $50\%$ participation.}
    {\label{fig:partialTargetedAttackInfluentialNode}}
\end{figure}

Figure~\ref{fig:partialTargetedAttackInfluentialNode} is probably the most striking one due to the fact that in all three cases, one can easily see that the fraction of nodes belonging to the giant component amongst the actively participating nodes is almost the same as when all nodes participate. This is a \textbf{very desirable} feature of $m$-A3F because it gives the user a natural choice - participate in the protocol, which costs come computational resources, but be in the giant component independently of the choices of other nodes or do not participate, but then you are facing serious risk of ending up disconnected from the giant component.

\subsubsection{Comparison}

First of all, the results for both protocols are obviously worse than for Random Failure model, which is not surprising. However, they still give a significant improvement of the size of the giant component. Moreover, in the regime of Targeted Adversary, the $m$-A3F has a very interesting property of assuring almost the same fraction of nodes belonging to the giant component for participating fraction of nodes (even if only $10\%$ of users participate) as in the case where all users participate. 

This regime shows that $m$-A3F is indeed a very powerful enrichment to the graph structure. Note that we went from no giant component for $20\%$ failures to almost $90\%$ nodes belonging to the giant component amongst the actively participating nodes even if only $10\%$ of users participate. This scenario shows a significant improvement of security which is gained via $m$-A3F for those who actively participate in it. Note that the difference between the performance of $m$-2SFF and $m$-A3F is strongly connected with utilizing preferential attachment in real networks.

\section{Some Consequences for Security and Privacy}\label{conse}

Let us assume that graph $G$ is $\xi$-strong.

\begin{corollary}
Assume that we have a network with underlying graph $G$ which is $\xi$-strong. Then using cryptographic methods for data aggregation (see for example~\cite{PaniShi}) one can aggregate data even without adding noise. Such results are already presented in literature and require appropriate amount of users participating (see \cite{bassily2013coupled,bhaskar2011noiseless, kifer2012rigorous})
\end{corollary}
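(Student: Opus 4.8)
The plan is to treat this as an assembly result: the structural guarantee of $\xi$-strength supplies a large honest, connected population, and two ingredients already in the literature — a cryptographic distributed aggregation protocol and the ``noiseless privacy'' phenomenon — turn that population into a privacy guarantee. First I would isolate the object that $\xi$-strength hands us. By definition, once the adversary has corrupted its chosen set $C$, the subgraph induced on the honest nodes contains a connected component $H$ with $|H|\ge \xi n$, where $n$ is the number of honest nodes; every node of $H$ is honest, every edge of $H$ joins two honest nodes, and in our model the adversary committed to $C$ without seeing $E_P$, so $H$ is genuinely adversary-independent. I then restrict attention to $H$ and to the communication that takes place inside it.

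Second, I would instantiate a secure aggregation scheme on $H$ in the style of \cite{PaniShi} (or \cite{Rastogi}): the members of $H$ set up additive shares of zero, or pairwise masks, so that the only quantity reconstructible — even by a curious party that observes all transmitted ciphertexts and colludes with every corrupted node in $C$ — is $S_H=\sum_{v\in H} x_v$. Connectivity of $H$ is exactly what lets the honest nodes run the key-setup and the masked-sum computation among themselves without routing any partial sum through a corrupted node; disconnection of the honest subgraph is precisely the failure mode such a protocol cannot tolerate, and $\xi$-strength rules it out for a $\xi$-fraction of the population. Thus the adversary's entire view becomes a function of $S_H$ together with data it already owns (the inputs of $C$).

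Third, I would invoke the noiseless-privacy results of \cite{bassily2013coupled,bhaskar2011noiseless,kifer2012rigorous}. Under the distributional assumptions made there — data drawn from a product-like distribution with bounded per-user influence — the released sum $S_H$ over $m=|H|$ contributors is $(\varepsilon,\delta)$-differentially private with no explicit noise, provided $m$ exceeds a threshold $m_0(\varepsilon,\delta)$ fixed by those assumptions. Since $|H|\ge \xi n$, it suffices to take $n$ large enough that $\xi n\ge m_0(\varepsilon,\delta)$, and monotonicity of the privacy guarantee in the number of contributors transfers it from exactly $m_0$ users to our $\ge \xi n$ users. Composing this with the cryptographic step yields that the whole interaction leaks nothing about an honest member of $H$ beyond what a $(\varepsilon,\delta)$-DP release of the sum would, which is the claim.

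The main obstacle I anticipate is not a single calculation but the bookkeeping of hypotheses. The noiseless-privacy theorems are distributional rather than worst-case statements, so the corollary is only as strong as the assumption that the users' data fits their model; one must also check that the honest-but-curious, collusion-tolerant model of the chosen aggregation protocol matches the adversary of Section~\ref{sect:model}, in particular that corrupted nodes in $C$ contribute nothing exploitable beyond their own inputs and the final aggregate. Making the constant $\xi$ and the sample-size threshold $m_0$ quantitatively compatible — i.e.\ spelling out for which $(\varepsilon,\delta)$ and which population sizes the conclusion is non-vacuous — is the part that needs care rather than ingenuity, and it is why the statement is phrased as a corollary of the $\xi$-strength results together with the cited literature rather than proved from scratch.
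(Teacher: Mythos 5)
Your proposal is correct and follows exactly the route the paper intends: the paper gives no explicit proof of this corollary, relying instead on the cited literature, and your three-step assembly (the $\xi$-strong guarantee yields a large connected honest component, a \cite{PaniShi}-style cryptographic aggregation on that component reduces the adversary's view to the honest sum, and the noiseless-privacy results of \cite{bassily2013coupled,bhaskar2011noiseless,kifer2012rigorous} apply once the component exceeds their participation threshold) is precisely the argument those citations are meant to supply. Your caveats about the distributional hypotheses and the matching of adversary models are apt and, if anything, make the claim more precise than the paper's own statement.
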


\begin{corollary}
Assume that we have a network with underlying graph $G$ which is $\xi$-strong. Then aggregation protocol PAALEC from~\cite{NaszeACNS} with parameters $\alpha = \exp(\frac{\epsilon}{\Delta})$ and $\beta = \frac{2\log(1/\delta)}{s}$ applied to graph $G$ is $(\epsilon,\delta)$-differentially private for the nodes belonging to the largest connected component. Moreover, PAALEC aggregation protocol is $(\epsilon,\delta + (1-\xi))$-differentially private for any arbitrary node.
\end{corollary}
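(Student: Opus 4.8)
The plan is to treat the privacy guarantee of PAALEC from~\cite{NaszeACNS} as a black box and to do only the bookkeeping that connects the combinatorial notion of $\xi$-strength to the single quantity — the number of honestly co-aggregating users — on which PAALEC calibrates its noise.

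First I would recall the statement we rely on from~\cite{NaszeACNS}: when PAALEC is run by a set $S$ of $s=|S|$ honest users that form an aggregation structure invisible to the adversary, with multiplicative/additive noise parameters $\alpha$ and $\beta$, it is $(\epsilon,\delta)$-differentially private for the users of $S$ whenever $\alpha\ge\exp(\epsilon/\Delta)$ and $\beta\ge 2\log(1/\delta)/s$, where $\Delta$ is the sensitivity of the aggregated function. Because $G$ is $\xi$-strong, the subgraph induced on the honest nodes has a connected component $S$ of size $s\ge\xi n$; the adversary, knowing only the initial graph and having removed the corrupted nodes together with their incident edges, cannot split $S$, so $S$ is exactly such an invisible aggregating set. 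Taking this $S$ and substituting the prescribed values $\alpha=\exp(\epsilon/\Delta)$, $\beta=2\log(1/\delta)/s$ meets the hypotheses of the PAALEC theorem verbatim, and we immediately get $(\epsilon,\delta)$-differential privacy for every node of the largest connected component.

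For the ``moreover'' part I would argue by cases on the position of a node. Fix a node $v$; over the (protocol) randomness, or equivalently for a uniformly chosen honest $v$, the event ``$v$ lies in the largest connected component of the honest subgraph'' has probability at least $\xi$, since that component contains at least $\xi n$ of the $n$ honest nodes. On this event the first part already gives the stronger $(\epsilon,\delta)$-guarantee for $v$; on the complementary event, of probability at most $1-\xi$, we fall back on the trivial $(\epsilon,1)$-bound. Invoking the standard fact that a mechanism which is $(\epsilon,\delta)$-differentially private conditioned on an event of probability at least $1-p$ is $(\epsilon,\delta+p)$-differentially private unconditionally (a $(1-p)/p$ mixture of an $(\epsilon,\delta)$-DP mechanism with an arbitrary one is $(\epsilon,(1-p)\delta+p)$-DP), with $p=1-\xi$ this yields exactly $(\epsilon,\delta+(1-\xi))$-differential privacy for an arbitrary node.

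The main obstacle I anticipate is not a hard inequality but pinning down the interface with~\cite{NaszeACNS}: one must check that ``$s$ participating users forming a structure hidden from the adversary'' is precisely the parameter controlling PAALEC's noise, and that the honest-but-curious, node-corrupting adversary of our Disconnection Game is no stronger (from PAALEC's viewpoint) than the adversary assumed there — in particular that corrupting a node is, as claimed in Section~\ref{sect:model}, equivalent to deleting it and its incident edges from~$G$. Once that correspondence is spelled out, both bullets follow by direct substitution, the second one additionally using only the elementary ``event of probability $1-p$'' lemma for $(\epsilon,\delta)$-DP, which I would state and use without proof.
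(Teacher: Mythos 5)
The paper states this corollary without any proof --- it is offered as a direct consequence of the guarantees proved for PAALEC in the cited work --- so there is no authorial argument to compare yours against; your reconstruction is the natural one and almost certainly what the authors intend. The first claim is exactly the substitution you describe: the largest honest component supplies the set of $s\ge\xi n$ users on which PAALEC's parameters $\alpha=\exp(\epsilon/\Delta)$ and $\beta=2\log(1/\delta)/s$ are calibrated, and the second claim is the standard mixture bound for a mechanism that is $(\epsilon,\delta)$-DP conditioned on a good event of probability at least $\xi$. Two points are worth making explicit. First, the event you condition on (membership in the largest component of $G_A$) is determined by the topology, the protocol coins and the adversary's choice $C$, and in particular does not depend on the private inputs; this data-independence is precisely what the mixture lemma needs, so it should be stated rather than left implicit. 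Second, $\xi$-strength bounds only the \emph{size} of the largest honest component, so $\Pra[v \in \text{largest component}]\ge\xi$ holds for a uniformly random honest node but does not follow for every fixed node --- a targeted adversary could deterministically exclude a particular $v$ --- hence the corollary's ``any arbitrary node'' must be read in this averaged sense (or under an exchangeability assumption on the protocol randomness and corruptions). Neither point breaks your argument; they only pin down the sense in which the second claim is true, and with them your proof is a faithful completion of what the paper asserts.
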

 \section{Previous and Related Work}\label{sect:related}

This paper spans several areas, thus many different papers should be pointed as related work. Since the idea of scale free network modeling appeared, there has been a vast amount of research concerning these kind of networks, including classic papers like \cite{albert2002statistical, barabasi2003scale, kumar2000stochastic, aiello2000random, strogatz2001exploring}. Also worth mentioning are papers which provided rigorous mathematical treatment for scale free networks \cite{bollobas2001degree, bollobas2003mathematical, bollobas2004diameter}. More recent papers on properties of scale free networks include \cite{barabasi2009scale, fotouhi2013degree}. Also worth mentioning are papers \cite{chung2002connected,chung2004average} where authors consider various properties of a graph given its expected degree list.

We should also mention papers about community structure in large networks \cite{leskovec2009community,van2009random}.
Some empirical result can also be found in \cite{clauset2009power}.

The problem of robustness in complex networks has also been widely analyzed. To mention a few papers concerning the robustness and enhancing of robustness in scale free networks we cite \cite{zhao2009enhancing,tanaka2012dynamical, zhang2015notion,yang2015improving}. 
One should also mention~\cite{flaxman2005adversarial} wherein authors consider adversarial deletion in scale free graphs and \cite{beygelzimer2005improving}, where authors improve graph robustness by edge modifications.
Note that, in the network robustness literature the notion of robustness is mostly the fact that the largest connected component exists. Here, however, we are interested in non-asymptotic results and more precise size (or lower bound for the size) of the giant component. Moreover, our protocols can be performed locally and without knowledge of the graph topology.

Furthermore, papers concerning various anonymity and 'crowd-blending' concepts should be mentioned. See for example \cite{ANODEF0, ANODEF1, ANODEF2, d-sweeney2, d-sweeney3} for $k$-anonymity. See also \cite{d-machanava,d-xiao} for extensions and variations of anonymity.

We should also mention some privacy preserving papers with emphasis on those which could benefit from having large connected component of appropriate size, namely \cite{bassily2013coupled,bhaskar2011noiseless, kifer2012rigorous, NaszeACNS}. Also important are the papers \cite{PaniShi, Hubercik} where authors use cryptographic methods to amplify privacy for large group of users in data aggregation scenario. For survey about privacy see \cite{DworkAlgo} and references therein.

 \section{Conclusions and Future Work}\label{sect:conclusion}

We presented how to improve the size of the largest connected component under massive adversarial attack and demonstrated 
why this observation is important for a wide range of applications (with most emphasis put on privacy preserving protocols). Moreover, our methods are conceptually simple and can be performed locally, i.e. with minimal knowledge about the global network. We proved that the presented methods are efficient in preferential-attachment graphs, which are commonly believed to be an accurate model of various real-life networks including social interaction networks, World Wide Web, airline networks and many other. Finally, we confirmed our observations using experiments on graphs of \textbf{real} networks. 

We believe that many  questions  important both for theory as well as design of practical privacy preserving solutions are left unanswered. In particular, for future work we plan to investigate:  
\begin{itemize}
 \item even stronger Adversary, who can choose adaptively (namely during the enhancement protocol) vertices to corrupt;
 \item longer random walks, where we establish an edge with every node visited on the way;
 \item Our protocols improve security of participating individuals, but the level of privacy is improved also for other users. The questions is, how to design a mechanism (i.e., via constructing extra incentives) to improve global privacy dependently on a power of the adversary. 
\end{itemize}

%
%
%
 \bibliographystyle{IEEEtran}

\bibliography{bibliography} 

\end{document}